\def\input@path{{styles/}{styles/}}
        \providecommand{\SoCGVer}[1]{}%
        \providecommand{\CGTVer}[1]{}%
        \providecommand{\RegVer}[1]{#1}%
        \def\UseBibLatex{1}
        \providecommand{\SoCGVer}[1]{}%
        \providecommand{\CGTVer}[1]{#1}%
        \providecommand{\RegVer}[1]{}%
    \providecommand{\SoCGVer}[1]{#1}%
    \providecommand{\CGTVer}[1]{}%
    \providecommand{\RegVer}[1]{}%
\newcommand{\SarielComp}[1]{}
\newcommand{\NotSarielComp}[1]{#1}%
\newcommand{\SarielComp}[1]{#1}%
\newcommand{\NotSarielComp}[1]{}%
\newcommand{\IfPrinterVer}[2]{#2}%
\providecommand{\BibLatexMode}[1]{}
\providecommand{\BibTexMode}[1]{#1}
  \renewcommand{\BibLatexMode}[1]{}
  \renewcommand{\BibTexMode}[1]{#1}
  \renewcommand{\BibLatexMode}[1]{#1}
  \renewcommand{\BibTexMode}[1]{}
   \theoremstyle{plain}%
   \newtheorem{remark:unnumbered}[FakeCounter]{Remark}%
   \newtheorem{conjecture}[theorem]{Conjecture}
   \newtheorem{fact}[theorem]{Fact}
   \newtheorem{observation}[theorem]{Observation}
   \newtheorem{invariant}[theorem]{Invariant}
   \newtheorem{question}[theorem]{Question}
   \newtheorem{prop}[theorem]{Proposition}
   \newtheorem{openproblem}[theorem]{Open Problem}
   \theoremstyle{plain}%
   \newtheorem{defn}[theorem]{Definition}
   \newtheorem{problem}[theorem]{Problem}
   \newtheorem{xca}[theorem]{Exercise}
   \newtheorem{exercise_h}[theorem]{Exercise}
   \newtheorem{assumption}[theorem]{Assumption}%
   \newtheorem{proofof}{Proof of\!}%
\theoremstyle{plain}%
\newtheorem{theorem}{Theorem}[section]
\newtheorem{lemma}[theorem]{Lemma}
\theoremstyle{plain}%
\newtheorem*{remark:unnumbered}[theorem]{Remark}%
\newtheorem{remark}[theorem]{Remark}%
\newtheorem{definition}[theorem]{Definition}
\newcommand{\myqedsymbol}{\rule{2mm}{2mm}}
\theoremstyle{nonumberplain}%
\newtheorem{proof}{Proof:}%
\newcommand{\atgen}{\symbol{'100}}
\newcommand{\SarielThanks}[1]{\thanks{Department of Computer Science;
      University of Illinois; 201 N. Goodwin Avenue; Urbana, IL,
      61801, USA; {\tt sariel\atgen{}illinois.edu}; {\tt
         \url{http://sarielhp.org/}.} #1}}
\newcommand{\TimThanks}[1]{\thanks{Department of Computer Science;
      University of Illinois; 201 N. Goodwin Avenue; Urbana, IL,
      61801, USA; {\tt tzhou28\atgen{}illinois.edu}.
      #1}}
\newcommand{\HLink}[2]{\hyperref[#2]{#1~\ref*{#2}}}
\newcommand{\HLinkSuffix}[3]{\hyperref[#2]{#1\ref*{#2}{#3}}}
\newcommand{\figlab}[1]{\label{fig:#1}}
\newcommand{\figref}[1]{\HLink{Figure}{fig:#1}}
\newcommand{\thmlab}[1]{{\label{theo:#1}}}
\providecommand{\deflab}[1]{\label{def:#1}}
\newcommand{\defref}[1]{\HLink{Definition}{def:#1}}
\newcommand{\seclab}[1]{\label{sec:#1}}
\newcommand{\secref}[1]{\HLink{Section}{sec:#1}}
\newcommand{\lemlab}[1]{\label{lemma:#1}}
\newcommand{\lemref}[1]{\HLink{Lemma}{lemma:#1}}%
\providecommand{\eqlab}[1]{}%
\renewcommand{\eqlab}[1]{\label{equation:#1}}
\definecolor{blue25emph}{rgb}{0, 0, 11}
\providecommand{\emphic}[2]{%
   \textcolor{blue25emph}{%
      \textbf{\emph{#1}}}%
   \index{#2}}
\providecommand{\emphi}[1]{\emphic{#1}{#1}}
\numberwithin{figure}{section}%
\numberwithin{table}{section}%
\numberwithin{equation}{section}%
\providecommand{\remove}[1]{}%
\newcommand{\Set}[2]{\left\{ #1 \;\middle\vert\; #2 \right\}}
\newcommand{\pth}[2][\!]{\mleft({#2}\mright)}%
\newcommand{\ceil}[1]{\left\lceil {#1} \right\rceil}
\newcommand{\floor}[1]{\left\lfloor {#1} \right\rfloor}
\newcommand{\cardin}[1]{\left| {#1} \right|}%
\newcommand{\rScale}{\mathsf{s_r}}%
\newcommand{\rShift}{\mathsf{u_r}}
\newcommand{\canonX}[1]{\text{\sc{canon}}\pth{#1}}%
\renewcommand{\th}{th\xspace}
\renewcommand{\Re}{\mathbb{R}}%
\newlist{compactenumA}{enumerate}{5}%
\setlist[compactenumA]{topsep=0pt,itemsep=-1ex,leftmargin=1cm,%
   partopsep=1ex,parsep=1.5ex,%
   label=(\Alph*)}%
\newlist{compactenuma}{enumerate}{5}%
\setlist[compactenuma]{topsep=0pt,itemsep=-1ex,partopsep=1ex,parsep=1ex,%
   label=(\alph*)}%
\newlist{compactenumI}{enumerate}{5}%
\setlist[compactenumI]{topsep=0pt,itemsep=-1ex,leftmargin=1cm,partopsep=1ex,parsep=1ex,%
   label=(\Roman*)}%
\newlist{compactenumi}{enumerate}{5}%
\setlist[compactenumi]{topsep=0pt,itemsep=-1ex,partopsep=1ex,parsep=1ex,%
   label=(\roman*)}%
\newlist{compactitem}{itemize}{5}%
\setlist[compactitem]{topsep=0pt,itemsep=-1ex,partopsep=1ex,parsep=1ex,%
   label=\ensuremath{\bullet}}%
\providecommand{\Mh}[1]{#1}%
\newcommand{\SaveContent}[2]{%
   \expandafter\newcommand{#1}{#2}%
}
\newcommand{\longSY}[2]{\SS_{\geq #1}\pth{#2}}%
\newcommand{\shortSY}[2]{\SS_{<#1}\pth{#2}}%
\newcommand{\CongChar}{\Mh{\mathcal{C}}}%
\newcommand{\congChar}{\Mh{\mathcalb{c}}}%
\newcommand{\SETH}{\Term{SETH}\xspace}%
\newcommand{\ThreeSUM}{\Term{3SUM}\xspace}%
\newcommand{\lenTShortX}[1]{\Mh{\Delta}_{#1}}%
\newcommand{\clX}[1]{\Mh{L}\pth{#1}}
\newcommand{\clSX}[1]{\Mh{L_{\mathrm{short}}}\pth{#1}}
\newcommand{\clLX}[1]{\Mh{L_{\mathrm{long}}}\pth{#1}}
\newcommand{\clALX}[1]{\Mh{L_{\mathrm{LONG}}}\pth{#1}}
\newcommand{\ACongOnY}[2]{\mathcal{A}_{#2}^{}\pth{#1}}%
\newcommand{\CongOnY}[2]{\CongChar_{#2}^{}\pth{#1}}%
\newcommand{\congOnY}[2]{\congChar_{#2}^{}\pth{#1}}%
\newcommand{\congX}[1]{\congChar\pth{#1}}%
\newcommand{\Longc}{\CongChar^{}_{\!\ts\geq \tr}}%
\newcommand{\longc}{\congChar^{}_{\!\ts\geq \tr}}%
\newcommand{\longcY}[2]{\congChar^{}_{\!\ts\geq #1}\pth{#2}}%
\newcommand{\QT}{\Mh{\mathcal{T}}}%
\newcommand{\QTM}{\Mh{\mathcal{T}^+}}%
\newcommand{\GSquares}{\Mh{{\mathcal{G}}}}
\newcommand{\GSS}{\GSquares_\SS}
\newcommand{\shortc}{\congChar_{<\tr}}%
\newcommand{\shortcY}[2]{\congChar_{<#1}\pth{#2}}%
\newcommand{\radiusX}[1]{\Mh{\mathsf{r}}\pth{#1}}%
\newcommand{\ts}{\hspace{0.6pt}}
\DeclareFontFamily{U}{BOONDOX-calo}{\skewchar\font=45 }
\DeclareFontShape{U}{BOONDOX-calo}{m}{n}{
  <-> s*[1.05] BOONDOX-r-calo}{}
\DeclareFontShape{U}{BOONDOX-calo}{b}{n}{
  <-> s*[1.05] BOONDOX-b-calo}{}
\DeclareMathAlphabet{\mathcalb}{U}{BOONDOX-calo}{m}{n}
\SetMathAlphabet{\mathcalb}{bold}{U}{BOONDOX-calo}{b}{n}
\DeclareMathAlphabet{\mathbcalb}{U}{BOONDOX-calo}{b}{n}
\newcommand{\tr}{\Mh{\alpha}}%
\newcommand{\myparagraph}[1]{%
   \RegVer{\paragraph*{#1}}%
   \CGTVer{\subparagraph*{#1}}%
}
\newlength{\CharHeight}
   \renewcommand{\myqedsymbol}{%
      \reflectbox{\includegraphics[height=1.4\CharHeight]%
         {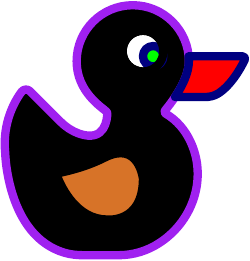}}%
   }%
\newcommand{\PS}{\Mh{P}}%
\newcommand{\SSet}{\Mh{\Xi}}%
\definecolor{almostblack}{rgb}{0, 0, 0.3}
\newcommand{\emphw}[1]{{\textcolor{almostblack}{\emph{#1}}}}%
\newcommand{\pp}{\Mh{p}}%
\newcommand{\maxSz}{\Mh{\rho}}%
\newcommand{\maxSzX}[1]{\maxSz\pth{#1}}%
\newcommand{\Term}[1]{\textsf{#1}}
\newcommand{\etal}{\textit{et~al.}\xspace}%
\newcommand{\eps}{{\varepsilon}}%
\newcommand{\epsA}{{\xi}}%
\newcommand{\prob}{\Mh{\varsigma}}%
\newcommand{\Frechet}{Fr\'{e}c{h}e{}t\xspace}%
\newcommand{\tldO}{\scalerel*{\widetilde{O}}{j^2}}%
\newcommand{\lenX}[1]{\left\|#1\right\|}
\newcommand{\seg}{\Mh{s}}
\providecommand{\SS}{}%
\renewcommand{\SS}{\Mh{\mathcal{S}}}%
\newcommand{\Grid}{\Mh{\mathsf{G}}\index{grid}}
\newcommand{\sq}{\Mh{\square}}%
\newcommand{\sqY}[2]{\Mh{\square}\pth{#1, #2}}%
\newcommand{\num}{\Mh{\tau}}%
\newcommand{\GridSetY}[2]{\Mh{\mathsf{G}}\pth{#1,#2}}%
\newcommand{\naive}{n{}a{\"\i}v{}e\xspace}
\newcommand{\Sample}{\Mh{R}}%
\newcommand{\Cell}{\Mh{C}}%
\newcommand{\LSet}{\Mh{L}}%
\newcommand{\ancX}[1]{\mathrm{anc}\pth{#1}}
\newcommand{\descX}[1]{\mathrm{desc}\pth{#1}}
\newcommand{\SC}{B}%
\newcommand{\SA}{\mathcal{S}}%
\newcommand{\BadProb}{\varphi}%
\newcommand{\cE}{c_2}%
\title{How Packed Is It, Really?}
   \author{Sariel Har-Peled}%
   {Department of Computer Science, University of Illinois, 201
      N. Goodwin Avenue, Urbana, IL 61801, USA}%
   {sariel@illinois.edu}%
   {https://orcid.org/0000-0003-2638-9635}%
   {Work on this paper was partially supported by a NSF AF award
      CCF-1907400.}%
   \author{Timothy Zhou}%
   {Department of Computer Science, University of Illinois, 201
      N. Goodwin Avenue, Urbana, IL 61801, USA}%
   {z.timothy96@gmail.com}%
   {https://orcid.org/0009-0009-1831-4634}%
   {Work on this paper was partially supported by a NSF AF award
      CCF-1907400.}%
   \author{Sariel Har-Peled}%
   {Department of Computer Science, University of Illinois, 201
      N. Goodwin Avenue, Urbana, IL 61801, USA. \and
      \url{https://sarielhp.org}.|}%
   {sariel@illinois.edu}
   {https://orcid.org/0000-0003-2638-9635}%
   {Work on this paper was partially supported by a NSF AF award
      CCF-1907400 and CCF-2317241.}%
   \author{Timothy Zhou}%
   {Department of Computer Science, University of Illinois, 201
      N. Goodwin Avenue, Urbana, IL 61801, USA}%
   {z.timothy96@gmail.com}%
   {https://orcid.org/0009-0009-1831-4634}%
   {Work on this paper was partially supported by a NSF AF award
      CCF-1907400.}%
   \authorrunning{S. Har-Peled and T. Zhou}%
   \keywords{curves, congestion, packedness, \Frechet distance,
      approximation algorithms}%
   \author{%
      Sariel Har-Peled%
      \SarielThanks{Work on this paper was partially supported by a
         NSF AF award CCF-1907400.  %
      }%
      \and %
      Timothy Zhou%
      \TimThanks{}%
   }%
   \authorrunning{S. Har-Peled and T. Zhou} %
   \keywords{curves, congestion, packedness, \Frechet distance,
      approximation algorithms} }
   \date{\today}%
\begin{document}

\maketitle

\begin{abstract}
    The congestion of a curve is a measure of how much it zigzags
    around locally.  More precisely, a curve $\pi$ is $c$-packed if
    the length of the curve lying inside any ball is at most $c$ times
    the radius of the ball, and its congestion is the minimum $c$ for
    which $\pi$ is $c$-packed. This paper presents a randomized
    $42$-approximation algorithm for computing the congestion of a
    curve (or any set of segments in the plane). It runs in
    $O( n \log^2 n)$ time and succeeds with high probability.
    Although the approximation factor is large, the running time
    improves over the previous fastest constant approximation
    algorithm \cite{gsw-appc-20}, which took $\tldO(n^{4/3})$ time. We
    carefully combine new ideas with known techniques to obtain our
    new near-linear time algorithm.
\end{abstract}

\section{Introduction}

In 2010, Driemel \etal \cite{dhw-afdrc-12} provided a measurement of
how ``realistic'' a curve is (there are several alternative
definitions -- see \cite{dhw-afdrc-12} and references therein for
details). Formally, a curve $\pi$ is \emphi{$c$-packed} if the total
length of $\pi$ inside any ball is bounded by $c$ times the radius of
the ball.  The minimum $c$ for which the curve is $c$-packed is the
\emphi{congestion} of the curve. Intuitively, if a curve has high
congestion, then it zigzags back and forth around some locality. We
expect that many real-world curves do not behave so pathologically --
instead, we expect them to exhibit low congestion. For examples of
real world curves, see \url{https://frechet.xyz}. Naturally, if the
curve is a tracking of an entity over long enough time, the congestion
might be high (for example, a soccer player movement during a whole
game, an airplane path over a month, etc).

Curves with low congestion lend themselves to efficient algorithms.
Notably, they can be efficiently approximated by simpler curves which
nearly preserve \Frechet distances, and therefore the \Frechet
distances between them can be approximated in near-linear time
\cite{dhw-afdrc-12}.  In general, if the congestion is $\Omega(n)$,
then computing the \Frechet distance is more difficult. Indeed,
assuming the Strong Exponential Time Hypothesis (\SETH), even
approximating \Frechet distance within a constant factor requires
quadratic time \cite{b-wwdtt-14}.  The decision version of the problem
is also conjectured to be \ThreeSUM-hard \cite{a-cgcs}. Note that
proving a direct connection between \ThreeSUM-hardness and \SETH is
still an open problem \cite{w-hepbh-15}.

We would like to verify that a given curve is indeed $c$-packed for a
low value of $c$. Some algorithms for $c$-packed curves do not require
knowing the value of the congestion -- rather, their analyses show
that if the curve is $c$-packed for some small $c$, then the
algorithms run in near-linear time. However, verifying that curves are
$c$-packed would increase our confidence that these algorithms are
generally applicable.  This leads to the question of how quickly can
one estimate or compute the congestion. In this paper, we present a
constant-factor approximation algorithm for the congestion that runs
in near-linear time.

\myparagraph{Disks and squares are all the same.}  As far as the
congestion is concerned, whether we use disks/balls or squares/cubes
in the definition is the same up to a constant factor (i.e.,
$\sqrt{2}$ in the plane). Thus, in the following, we work usually in
settings of using squares, since it is easier.

\paragraph*{Previous work.}

As mentioned earlier, the concept of $c$-packedness was introduced in
Driemel \etal \cite{dhw-afdrc-12}.  Computing the congestion exactly
runs into the issue of minimizing sums of square roots.  As such, an
exact algorithm in the standard RAM model is unlikely.  The work of
Vigneron \cite{v-gosaf-14} provided a $(1+\eps)$-approximation
algorithm which runs in
$O\bigl( (n/\eps)^{d+2} \log^{d+2} (n/\eps) \bigr)$ time. Gudmundsson
\etal offered a cubic-time algorithm for this problem
\cite{gks-alhctd-13}. Aghamolaei \etal \cite{akgm-wqmse-20} gave a
$(2+\eps)$-approximation for the problem of approximating the
congestion of planar curves, but unfortunately their running time
seems to be at least quadratic.

Some of the previous work \cite{gks-alhctd-13} was also concerned with
computing hotspots, which are small regions in the plane where many
segments pass through. Computing regions of high congestion naturally
leads to finding hotspots. However, one usually fixes the resolution
of the desired hotspots (i.e. the radius of the balls being
intersected) before searching them.

More recently, Gudmundsson \etal \cite{gsw-appc-20} gave a
$(6+\eps)$-approximation to the congestion/packedness of a polygonal
curve in $\tldO(n^{4/3}/\eps^4)$ time. As part of their algorithm,
they showed that one can quickly find a set of $O(n)$ squares whose
congestions yield a constant approximation to the congestion of the
curve. They also noted that the problem of finding the congestion of
these squares with respect to the segments seems quite similar to the
Hopcroft problem, as discussed below.

\myparagraph{The Hopcroft problem.}

Given a set $\PS$ of $n$ points in the plane, and a set $\LSet$ of $n$
lines in the plane, the question is whether there is a point of $\PS$
that is incident to one of the lines of $\LSet$.  There is an
$\Omega(n^{4/3})$ lower bound for Hopcroft's problem due to Erickson
\cite{e-rcsgp}.  As usual, this lower bound holds only in a restricted
model of computation (algebraic decision tree model), but it is
believed to hold in broader models of computation. The belief that
this lower bound is correct (up to maybe some polylog noise) in any
reasonable model of computation is quite important as it provides
matching lower bound to the main results known in range searching.

In our language, the Hopcroft problem can be stated as having a set of
$n$ segments, and a set of $n$ (disjoint) squares, and asking for the
maximum congestion of the squares in relation to the segments.

\myparagraph{The challenge.}  If we replace a point by a short
segment, then deciding the packedness for sufficiently small squares
centered at these points is equivalent to solving the Hopcroft
problem. Naturally, this reduction of hardness only works if the
approximation is quite small (say $<2$-approx).  For this reason,
Gudmunsson \etal \cite{gsw-appc-20} deemed it unlikely that their
approach, ``or a similar approach, can lead to a considerably faster
algorithm'' for computing congestion.  Despite this lower bound
working only if the approximation constant is small, it is quite
interesting to figure out if one can break this ``lower bound'' (by
providing a worse approximation).

\paragraph*{Our result.}

Despite this difficulty, we present a randomized $O(n \log^2 n)$ time
algorithm that provides a constant approximation to the congestion of
a set of segments in $\Re^2$.  The algorithm bypasses the barrier
presented by Hopcroft's problem by observing that, when computing
congestion, the generated instances for the Hopcroft problem have high
congestion. In such scenarios, we do not need to compute the
congestion of $\Omega(n)$ disjoint squares exactly (as required by the
Hopcroft problem).

\paragraph*{Sketch of algorithm.}

The \emphw{congestion} of a square with respect to a curve $\pi$ is
the total length of $\pi$ inside it, divided by its
sidelength. Following \cite{gsw-appc-20}, we reduce the problem of
approximating the congestion of a curve to that of computing the
congestion of $O(n)$ squares. Then we build a ``few'' quadtrees whose
cells approximate these squares, so that it suffices to compute the
congestion of the quadtrees.

We can compute the congestion of a quadtree cell by finding all the
segments of the curve which intersect it, then explicitly computing
the total length of the intersections. However, naively searching for
all the segments intersecting each of the squares takes $O(n^2)$
time. To speed things up, we store each segment at some quadtree cell
with comparable length. For a given cell, its short segments are those
stored at descendants in the quadtree, and its long segments are those
stored at ancestors. To find the congestion of a cell, we compute the
sum of its long and short congestions -- the congestions with respect
to all of its long and short segments, respectively. We can quickly
compute the short congestion of a cell by summing the lengths of short
segments in the quadtree bottom-up.

Exactly computing the long congestion is somewhat tricky. Fortunately,
approximating the long congestion only requires counting the maximum
number of long segments intersecting any cell. If every cell
intersects only a few long segments, then we can quickly enumerate all
the intersections by searching the quadtree top-down. If there is a
cell intersecting many long segments, then the algorithm performs an
exponential search to guess how many intersections it has. The
algorithm quickly verifies its guess by taking a random sample of
input segments, enumerating its intersections with each quadtree cell,
and using the counts to estimate the maximum number of intersections.

\myparagraph{Highlight.}

This work ``bends'' what was previously believed to be a lower bound
on the running time for approximating congestion (i.e.,
$\Omega(n^{4/3})$). While most of the tools we use are standard, the
way we combine them is non-trivial and offers new insights into the
problem.

Since we lose constant factors in several places, the resulting
approximation factor is quite bad compared to previous work (i.e.,
$42$ vs. $6$). However, our algorithm runs in near-linear time, which
is significantly faster.  We find this result surprising, as it
bypasses the barrier formed by the Hopcroft problem mentioned above.

\paragraph*{Paper organization.}
We start in \secref{prelims} by providing some definitions and
background.  In \secref{long:and:short}, we reduce the problem of
computing the congestion to that of computing the congestions of a
small number of quadtrees. This in turn reduces to the problem of
computing the congestion from long and short segments.  The short
congestion is handled in \secref{short:alg}, while the main challenge
of approximating the long congestion is addressed in \secref{long},
where we approximate the load of the long segments -- i.e., the
maximum number of long segments intersecting a single cell of the
quadtree.  In \secref{together}, we put everything together. We
conclude in \secref{conclusions} with a few remarks.

\section{Preliminaries}
\seclab{prelims}

\subsection{Standard tools}

\begin{definition}
    For a real positive number $\num$, let $\Grid_\num$ be the
    \emphi{grid} partitioning the plane into axis-parallel squares of
    sidelength $\num$. Formally, this grid is defined by the mapping
    $\Grid_\num(x,y) = \bigl( \floor{x/\num}, \floor{y/\num} \bigr)$.
    The number $\num$ is the \emphi{width} or \emphi{sidelength} of
    $\Grid_\num$.  For integers $i,j$, the $(i,j)$-\emphi{grid cell}
    is the $\num \times \num$ square formed by the set
    $\Grid_\num^{-1}(i,j)$.
\end{definition}

\begin{definition}
    A square is a \emphi{canonical square} if it is contained inside
    the unit square, it is a cell in a grid $\Grid_w$, and $w$ is a
    power of two. That is, the square corresponds to a node in the
    infinite quadtree defined over $[0,1)^2$.  The grid generating a
    canonical square is a \emphi{canonical grid}.
\end{definition}

\subsection{Congestion}
\begin{definition}
    Let $\sq = \sqY{\pp}{r}$ denote the axis-parallel square in
    $\Re^2$ centered at a point $p \in \Re^2$ with \emphi{sidelength}
    $2r$.  The square $\sq$ can be interpreted as a ball in the
    $L_\infty$ norm, and as such, its \emphi{radius} is $r$.
\end{definition}

\begin{definition}
    For a segment $\seg$, let $\lenX{\seg}$ denote the length of
    $\seg$. Similarly, for a set of segments $\SS$, let
    $\lenX{\SS} = \sum_{\seg \in \SS} \lenX{\seg}$ denote the total
    length of segments in $\SS$.

    For a square $\sq = \sqY{\pp}{r}$, the \emphi{conflict list} of
    $\sq$ (for a set $\SS$ of segments) is the set of segments
    intersecting $\sq$, that is
    \begin{equation*}
        \clX{\sq}
        =%
        \Set{\seg \in \SS}{ \seg \cap \sq
           \neq \emptyset }.
    \end{equation*}
    Let
    \begin{equation*}
        \SS \sqcap \sq= \Set{ \seg \cap \sq}{ \seg \in \clX{\sq}
        }
    \end{equation*}
    be the clipping of the segments of $\clX{\sq}$ to $\sq$.  The
    \emphi{congestion} of the square $\sq$, with respect to $\SS$, is
    \begin{equation*}
        \congX{\sq} = \congOnY{\sq}{\SS} = \lenX{ \SS \sqcap \sq } /r.
    \end{equation*}
    The \emphi{congestion} of the set of segments $\SS$ is
    $\congX{\SS} = \underset{\pp, r}{\max} \; \congOnY{\sqY{\pp} {r}
       \bigr.}{\SS}$.  Given a set of squares $\SSet$, its
    \emphi{congestion} is
    $\congOnY{\SSet}{\SS} = \max_{\sq \in \SSet} \congX{\sq}$.
\end{definition}

\begin{definition}
    For a constant $c > 0$, a set $\SS$ of segments in $\Re^2$ is
    \emphi{$c$-packed} if, for any point $\pp \in \Re^2$ and any value
    $r > 0$, the total length of the segments of $\SS$ inside a square
    $\sq = \sqY{\pp}{r}$ is at most $c r$. That is, the congestion of
    $\SS$ is at most $c$.
\end{definition}

Thus, the congestion of $\SS$ is the minimum $c$ for which $\SS$ is
$c$-packed.  We are interested in approximating $\congX{\SS}$.  To
this end, we follow Gudmundsson \etal \cite{gsw-appc-20}, who reduced
the problem to querying the lengths of intersections between the curve
and some squares. While Gudmundsson \etal state their result for a
curve, it holds for any set of segments.

\begin{lemma}[Lemma 12 in \cite{gsw-appc-20}]
    \lemlab{wspd}%
    Given a set $\SS$ of $n$ segments in the plane, and a parameter
    $\eps \in (0,1)$, one can compute, in $O(n \log n + n / \eps^2)$
    time, a set $\GSS$ of $O(n/\eps^2)$ axis-aligned squares, such
    that
    $\congX{\SS} \geq \congOnY{\GSS}{\SS} \geq \congX{\SS}/(6+\eps)$.
\end{lemma}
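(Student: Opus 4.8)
The bound $\congX{\SS} \geq \congY{\GSS}{\SS}$ holds for free: every element of $\GSS$ is an axis-parallel square, and $\congX{\SS}$ is by definition the supremum of $\congX{\sq}$ over all such squares. So the entire content is the lower bound $\congY{\GSS}{\SS} \geq \congX{\SS}/(6+\eps)$, i.e.\ producing a single candidate square whose congestion is within a constant factor of the optimum.

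\emph{Construction.} The plan is to let the candidate squares be generated by a well-separated pair decomposition (WSPD) of the $2n$ endpoints of the segments of $\SS$, taken with separation parameter $s = \Theta(1/\eps)$. Such a WSPD has $O(s^2 n) = O(n/\eps^2)$ pairs and is computable in $O(n\log n + n/\eps^2)$ time by standard machinery, which already accounts for the claimed size of $\GSS$ and the running time. For each pair $\{A,B\}$, with representatives $a \in A$, $b \in B$ and $d = \lenX{a-b}$, I would add to $\GSS$ a constant number of axis-parallel squares of radius $\Theta(d)$ placed so as to cover the two representatives (for instance centered at $a$, at $b$, and at their midpoint), at a couple of comparable scales. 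This keeps $\cardin{\GSS} = O(n/\eps^2)$.

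\emph{The approximation bound.} Fix a square $\sq^* = \sqY{\pp^*}{r^*}$ with $\congX{\sq^*}$ arbitrarily close to $\congX{\SS}$, and write $L = \lenX{\SS \sqcap \sq^*} = \congX{\sq^*}\, r^*$ for the clipped length inside it. The heart of the argument is to exhibit a WSPD pair whose candidate square $\sq'$ has radius $r' \le (3 + O(\eps)) r^*$ while still containing a $1-O(\eps)$ fraction of the length captured by $\sq^*$. Granting this, $\congX{\sq'} \ge \lenX{\SS \sqcap \sq'}/r' \ge (1-O(\eps)) L / ((3+O(\eps)) r^*) = \congX{\sq^*}/(6+\eps)$ once the constants are collected, which yields the lemma. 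The factors arise from (i) enlarging a square anchored at a nearby endpoint so that it covers $\sq^*$ (a factor of roughly two), (ii) rounding the true scale $r^*$ to the nearest available pair distance, and (iii) replacing extreme points by WSPD representatives, a $1+O(\eps)$ loss — this is exactly where the choice $s = \Theta(1/\eps)$ is spent.

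\emph{Finding the pair (the hard part).} To locate the right pair I would split the segments meeting $\sq^*$ according to whether they are \emph{anchored} (having an endpoint in the dilated square $\sqY{\pp^*}{2r^*}$) or \emph{crossing} (passing all the way through with both endpoints outside it); one of the two classes must carry at least $L/2$ of the clipped length. In the anchored case the relevant endpoints lie in a region of radius $O(r^*)$, and the separated pair realizing scale $\Theta(r^*)$ among them produces a candidate square that covers $\sq^*$. The main obstacle is the crossing case, together with the degenerate anchored situation where all nearby endpoints collapse into a sub-$\eps r^*$ cluster: here no pair of endpoints lying inside $\sq^*$ directly certifies the scale $r^*$, so one must instead use a pair formed by a near endpoint and a far endpoint of a single long segment, arguing that this pair has distance $\Theta(r^*)$ and that its induced square, placed at the near endpoint, still recaptures a constant fraction of the length inside $\sq^*$. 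Verifying that the candidate squares genuinely recover a constant fraction of $L$ in \emph{every} configuration — not merely that they cover $\sq^*$ as a set — is the delicate, case-driven core of the proof.
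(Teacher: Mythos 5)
The paper does not prove this lemma at all --- it is imported verbatim as Lemma~12 of \cite{gsw-appc-20}, so there is no in-paper proof to compare against. Judged on its own terms, your proposal has the right skeleton (the trivial direction, a WSPD on the $2n$ endpoints with separation $\Theta(1/\eps)$, hence $O(n/\eps^2)$ pairs and the stated running time, and a constant number of candidate squares per pair), and the constant accounting for the anchored case is plausible once the $L/2$ split is folded in. But it is not a proof: you explicitly defer the crossing case and the degenerate-cluster case, and these are exactly where the lemma lives.

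Moreover, the one concrete idea you offer for the crossing case does not work as stated. If every segment meeting $\sq^* = \sqY{\pp^*}{r^*}$ has both endpoints outside $\sqY{\pp^*}{2r^*}$, then the WSPD pair separating the two endpoints $a,b$ of such a segment has scale $\Theta(\lenX{a-b})$, which can be arbitrarily larger than $r^*$; a candidate square of that radius anchored at the ``near'' endpoint may well contain $\sq^*$, but its congestion is then only about $L/\lenX{a-b}$, which is smaller than $\congX{\sq^*} = L/r^*$ by an unbounded factor. The invariant ``the candidate recaptures a constant fraction of the length inside $\sq^*$'' is the wrong one at a larger scale, precisely because congestion is normalized by the radius. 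The correct route in the crossing case is different: if $\sq^*$ has congestion $c$ from crossing segments alone, then at least $c/(2\sqrt{2})$ distinct segments cross it, and each such segment contributes length $\Omega(R)$ to any concentric square of radius $R$ up to (roughly) the shortest crossing segment's length --- so one must grow the square until an endpoint appears at a comparable scale (or bucket the crossing segments by length) and charge the congestion to the \emph{count} of crossing segments rather than to $L$. Until that argument is carried out and shown to be certified by some pair actually present in the WSPD, the lower bound $\congY{\GSS}{\SS} \geq \congX{\SS}/(6+\eps)$ is not established.
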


\begin{remark}
    For completeness, we sketch informally an alternative proof for (a
    weaker version of) \lemref{wspd}. Let $P$ be the set of endpoints
    of the segments of $\SS$, and construct a (randomly translated)
    compressed quadtree $T$ for the points of $P$. For every node $v$
    in $T$, we add the square $\Cell_v$ to the set of candidate
    squares $\GSS$, and we also add a few scaled copies, say
    $2\Cell_v$ and $4\Cell_v$ to $\GSS$.

    Let $\sq$ be the square with maximum congestion $\congX{\SS}$.  We
    replace $\sq$ by a square $\sq' \supseteq \sq$ that is centered in
    one of the endpoints of $P$, with congestion at least
    $\congX{\SS}/2$ (the interesting case is when $\sq$ contains no
    point of $P$, but then just enlarge it till it does).

    We then continue enlarging $\sq'$ till it hits another point of
    $P$. Let $\sq''$ be the resulting square. Informally, the loss in
    congestion is a constant. Now, the center of $\sq''$ and a point
    on its boundary both belong to $P$.  With constant probability
    $\sq''$ is fully contained in a cell $\Cell_v$ of a node $v$ of
    the quadtree (or its enlarged copied added explicitly to $\GSS$),
    that is only a constant factor bigger. Thus, a square in $\GSS$
    provides the desired approximation to the congestion.
\end{remark}

\section{The algorithm: The long and short of it}
\seclab{long:and:short}

\subsection{Reduction to quadtrees}

In what follows, let $\GSS$ be the set of squares computed by
\lemref{wspd} for $\SS$ (the value of $\eps$ would be specified
shortly).  To approximate $\congX{\SS}$, it suffices to approximate
$\congOnY{\SS}{\GSS}$.

Since congestion is invariant under translation and scaling, we might
as well assume that $\SS, \GSS \subseteq [0,1/8]^2$. We can randomly
scale both sets by a random number $\rScale \in [1,2]$, and shift both
sets by a random vector $\rShift \in [0,1/2]^2$. Let
$\Lambda(p) = \rScale p + \rShift$ be the resulting affine mapping.
We compute for each square $\sq \in \Lambda(\GSS)$ the smallest
canonical square $\canonX{\sq}$ that contains it -- conceptually,
consider the infinite quadtree, and computing the lowest node $v$ in
the quadtree that its cell contains $\sq$ -- the square $\Cell_v$ is
$\canonX{\sq}$. Algorithmically, $\canonX{\sq}$ can be computed in
$O(1)$ time using the floor function and some basic bit operations,
see \cite{h-gaa-11}.  Next, we build a quadtree that has all these
marked canonical squares as nodes. This can be done in $O(n \log n)$
time \cite{h-gaa-11}. The idea is to repeat this process sufficient
number of times, such that in one of the generated quadtrees,
$\canonX{\sq}$ and $\sq$ are almost identical, and this holds for all
the squares of interest.

\begin{lemma}
    \lemlab{containment}
    For a square $\widehat{\sq} \in \GSS$, and a parameter
    $\eps \in (0,1/2)$, consider its randomly scaled and shifted copy
    $\sq = \rScale \widehat{\sq} + \rShift$, and its canonized version
    $\sq' = \canonX{\sq}$.

    The probability that
    $r = \radiusX{\sq} \leq \radiusX{\sq'} \leq (1+\eps)r$ is
    $\geq (\eps/8)^3$.
\end{lemma}
\begin{proof}
    By construction $\sq \subseteq \sq'$.  Let
    $\ell = \radiusX{\smash{\widehat{\sq}}}$, and let
    $L = 2^{\ceil{\log_2 \ell }}$ be the rounding up of $\ell$ to its
    closest power of $2$. For simplicity of exposition assume that
    $\zeta = L/l > (1+\eps)$ -- as otherwise, one can apply the
    analysis to $2L/\ell$.

    The optimal scaling for our purposes is $\zeta$, but let us be
    slightly less greedy, and consider the random scaling $\rScale$ to
    be \emphw{good} if
    $\rScale \in [(1-\eps/4)\zeta, (1-\eps/8)\zeta]$.  This interval
    is of length $(\eps/8)\zeta \geq \eps/8$. Thus, the random
    scaling is good with probability $\geq \eps/8$, and assume this
    happened.

    Consider the canonical grid $\Grid_{2L}$, and $r =\radiusX{\sq}$.
    We have that $2r \in [(1-\eps/4)2L, (1-\eps/8)2L]$. as such, the
    set of all good translations
    \begin{equation*}
        U =
        \Set{ (x,y) \in \Re^2 }{ (x,y) +
           \rScale \widehat{\sq} \text{ contained in a cell of }
           \Grid_{2L}
        }
    \end{equation*}
    is a grid like set with sidelength $2L$, where each grid point is
    replaced by a square of sidelength $\geq (\eps/8)2L$. The
    probability that the random shifting $\rShift$ falls in $U$ is at
    least $(\eps/8)^2$.

    If both things happen, then
    $r \leq L \leq (1+\eps)(1-\eps/4)L \leq (1+\eps) r$. Namely,
    $\sq \subseteq \sq'$, and
    $r = \radiusX{\sq} \leq \radiusX{\sq'} \leq (1+\eps)r $, as
    desired.
\end{proof}

\begin{definition}
    \deflab{cong:q:t}%
    For a compressed quadtree $\QT$, let $\SSet(\QT)$ be the set of
    all squares formed by nodes of $\QT$.  The \emphi{congestion} of a
    quadtree $\QT$ for a set of segments $\SS$ is the quantity
    $\congOnY{\QT}{\SS} = \congOnY{\SSet(\QT)}{\SS}$.
\end{definition}

\begin{lemma}
    \lemlab{canonical}%
    Given a set $\SS$ of $n$ segments, one can compute, in
    $O(n \log^2 n)$ time, $m = O( \log n)$ (shifted) compressed
    quadtrees, $\QT_1, \ldots, \QT_m$, such that
    \begin{math}
        \frac{1}{7} \congX{\SS}\leq%
        \max_{i} \congOnY{\QT_i}{\SS} \leq \congX{\SS}.
    \end{math}
    This holds with high probability.
\end{lemma}

\begin{proof}
    Let $\eps \in (0,1)$ be a sufficiently small constant to be
    specified shortly. Compute the set $\GSS$ of squares specified by
    \lemref{wspd}. Next, compute $m = O( \eps^{-3} \log n )$ randomly
    shifted copies $\GSS^1, \ldots, \GSS^m$ of $\GSS$, as described
    above. For each one of them we compute a compressed quadtree. Each
    such compressed quadtree, can be interpreted as a shifted
    compressed quadtree over the original set of squares $\GSS$. And
    let $\QT_1, \ldots, \QT_m$ be these quadtrees.

    Fix a square $\sq \in \GSS$. By \lemref{containment}, with
    probability $p \geq (\eps/8)^3$, there is a node $u$, and thus its
    cell $\sq'$, such that $\sq \subseteq \sq'$ and
    $\radiusX{\sq'} \leq (1+\eps)\radiusX{\sq}$.  Thus, we have
    $\congOnY{\sq'}{\SS} \geq \congOnY{\sq}{\SS}/(1+\eps)$.

    In particular, the probability that this ``good'' containment does
    not happen for $\sq$, in all $m$ quadtrees, is
    $(1-p)^m \leq 1/n^{O(1)}$. We conclude that, high probability, for
    all squares $\sq\in\GSS$, there is at least one node/square, in
    the computed quadtrees, that tightly contains $\sq$.

    This readily implies that
    \begin{equation*}
        \max_{i} \congOnY{\QT_i}{\SS}
        \geq
        \frac{1}{1+\eps}\max_{\sq \in \GSS} \congOnY{\sq}{\SS}
        \geq
        \frac{1}{(1+\eps)(6+\eps)} \congX{\SS}
        \geq
        \frac{1}{7} \congX{\SS},
    \end{equation*}
    for $\eps = 1/10$.
\end{proof}

\begin{remark}
    Throughout the algorithm, we use compressed quadtrees, rather than
    regular quadtrees. The use of compressed quadtrees is necessary to
    get an efficient runtime, but it does not affect the description
    of our algorithm significantly. (Intuitively, the
    compressed quadtree compresses paths and not cells, so one can easily
    ensure that all cells of interest appear in the compressed
    quadtree as regular cells). As such, from this point on, we use
    quadtree as a shorthand for a compressed quadtree, and we ignore
    the minor low-level technical details that arise because of the
    compression.

    Specifically, the only issue in the algorithm where this would
    make a difference is in traversing down the tree and assuming that
    the child is exactly half the size of its parent. This can easily
    be fixed by working directly with the node own record of its
    dimensions.
\end{remark}

\subsection{First steps towards approximating %
   the congestion of a quadtree}

Given a set $\SS$ of $n$ segments and a quadtree $\QT$ of size $O(n)$,
we wish to compute the congestion of $\QT$.  To simplify the
exposition, we assume that any shift needed is applied to the input
segments, and the quadtree is thus standard one built over $[0,1]^2$.

\subsubsection{A \naive exact algorithm for %
   the congestion of a quadtree}
\seclab{naive:1}

At each node $v \in \QT$, the algorithm stores a \emphi{conflict list}
$\clX{v}$ -- a list of all the segments of $\SS$ intersecting
the cell $\sq_v$ associated with $v$. It begins by storing $\SS$ at
the root of $\QT$, then recursively traverses down the tree.  At
each parent node $u$, the algorithm sends the list $\clX{u}$ to all
its children. Each child $v$ finds all the segments which intersect
its cell $\sq_v$ and adds them to its own list.  At
the end of this process, the algorithm has computed the segments
intersecting each quadtree node, and it can use them to compute the
congestion. Overall, this algorithm takes $O(n^2)$ time.

\subsubsection{The long and the short of it}

To speed up the \naive algorithm, we implement several strategies. The
first is to register the segments directly in the cells at a suitable
resolution of the quadtree.

\begin{definition}[long/short threshold]
    Let $\tr > 0$ be a fixed integer constant, to be specified
    shortly\footnote{Spoiler alert! The butler did it, $\tr=20$, and
       the hero dies in the end. The hero did try to expose $\tr$, but
       it was too late for them.}. The parameter $\tr$ is the
    \emphi{transition} constant.  A segment $\seg$ is
    \emphi{$\tr$-long} (resp., \emphi{$\tr$-short}) for a square
    $\sq = \sqY{\pp}{r}$ if it intersects the interior of $\sq$ and
    $\lenX{\seg} \geq \tr r$ (resp., $\lenX{\seg} < \tr r$). For a
    square $\sq$ with radius $r$, its set of $\tr$-long (resp.,
    $\tr$-short) segments is denoted by $\longSY{\tr}{\sq}$ (resp.,
    $\shortSY{\tr}{\sq}$).
\end{definition}

\begin{figure}
    \phantom{} \hfill%
    \includegraphics[page=1]{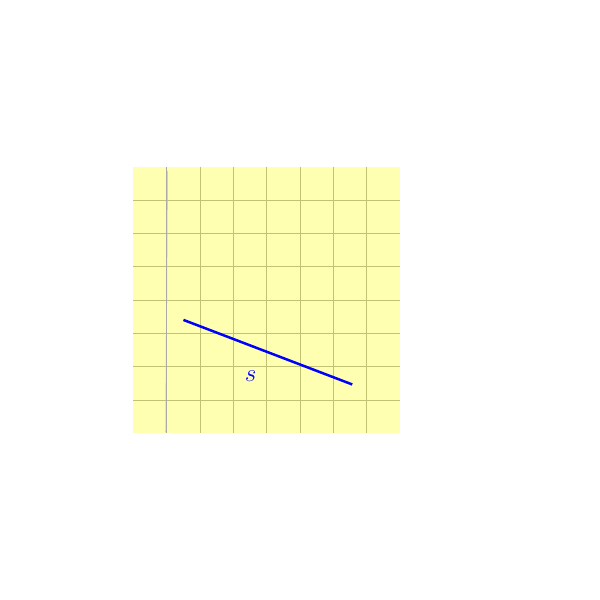}%
    \hfill%
    \includegraphics[page=2]{figs/grid_seg}%
    \hfill%
    \phantom{}%
    \caption{Computing a maximal set of $\tr$-long segments for a
       segment $\seg$ (see \lemref{register}). In this case, the
       segment has length $5$, and each grid cell is of radius $0.5$,
       so the segment is $10$-long for all the cells of the grid it
       intersects.}
    \figlab{canonical}
\end{figure}

\begin{lemma}
    \lemlab{register}%
    For each segment $\seg \in \SS$, let $\GridSetY{\seg}{\tr}$ be the
    set of interior-disjoint canonical squares of maximal size for
    which $\seg$ is $\tr$-long. There are at most $O(1 + \tr)$ such
    squares, and they can be computed in $O(1 + \tr)$ time.
\end{lemma}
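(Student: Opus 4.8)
The plan is to identify the set $\GridSetY{\seg}{\tr}$ completely explicitly and then count its elements. First I would observe that whether $\seg$ is $\tr$-long for a canonical square depends only on the square's side length: a canonical square $\sq$ of side length $w$ has radius $w/2$, so $\seg$ is $\tr$-long for $\sq$ exactly when $\seg$ meets the interior of $\sq$ and $\lenX{\seg} \geq \tr (w/2)$, i.e.\ $w \leq 2\lenX{\seg}/\tr$. Thus being $\tr$-long is monotone in the side length, and a canonical square is \emph{maximal} for this property iff its side equals the largest power of two $w^*$ with $w^* \leq 2\lenX{\seg}/\tr$. Since $w^*$ is the largest such power of two, doubling it violates the bound, which yields the two-sided estimate $\lenX{\seg}/\tr < w^* \leq 2\lenX{\seg}/\tr$. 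I would then argue that the elements of $\GridSetY{\seg}{\tr}$ are precisely the cells of the canonical grid $\Grid_{w^*}$ whose interior $\seg$ crosses: these are interior-disjoint by the definition of a grid, and each is maximal because its parent cell (of side $2w^* > 2\lenX{\seg}/\tr$) fails the length condition.

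Next I would bound the count using a standard grid-crossing estimate. The number of cells of a grid of side $w$ that a segment of length $\ell$ can cross is at most one plus the number of vertical and horizontal grid lines it meets, which is $O(1 + \ell/w)$. Substituting $\ell = \lenX{\seg}$ and $w = w^*$, and invoking the lower bound $w^* > \lenX{\seg}/\tr$ from the previous step, gives $\lenX{\seg}/w^* < \tr$, so $\cardin{\GridSetY{\seg}{\tr}} = O(1+\tr)$.

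For the running time, I would compute $w^*$ in $O(1)$ time from $\lenX{\seg}$ (a floor of a base-two logarithm), and then enumerate the crossed cells by walking $\seg$ across $\Grid_{w^*}$ with a standard grid-traversal (DDA) procedure that repeatedly steps into the next cell through the nearer of the two upcoming grid lines. This walk reports the cells in order, spending $O(1)$ time per cell, so it runs in time proportional to the output size, namely $O(1+\tr)$.

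The main obstacle is really just pinning down the admissible side length and establishing the two-sided bound $\lenX{\seg}/\tr < w^* \leq 2\lenX{\seg}/\tr$, since its lower half---which is exactly where power-of-two maximality is used---is what converts the generic ``$O(1+\ell/w)$ cells crossed'' estimate into the sharp $O(1+\tr)$ bound; everything else is a routine crossing count and a textbook traversal. A minor case to dispatch is when $2\lenX{\seg}/\tr \geq 1$, where $w^*$ is capped at the root side length, but capping only shrinks the cell count, so the bound is unaffected.
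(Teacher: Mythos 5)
Your proof is correct and follows essentially the same route as the paper: pin down the unique canonical grid resolution $w^*$ with $\lenX{\seg}/w^* = \Theta(\tr)$, count the $O(1+\tr)$ cells of that grid crossed by $\seg$ via the number of grid lines met, and enumerate them with a standard grid-traversal in $O(1)$ time per cell. The only (immaterial) difference is a factor of two in the chosen side length, since the paper takes $\num = 2^{\floor{\log_2(\lenX{\seg}/\tr)}}$ while you account for the radius being half the side length; your justification of why the maximal squares all lie in a single grid level is a welcome extra detail the paper leaves implicit.
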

\begin{proof}
    Let $i = \floor{ \log_2 (\lenX{\seg}/\tr)}$, and $\num=2^i$. Since
    \begin{equation*}
        \tr \num
        =
        \tr 2^{\floor{\log_2(\lenX{\seg}/\tr)}}
        \leq%
        \tr \cdot \frac{\lenX{\seg}}{\tr}
        \leq%
        \lenX{\seg}
        < \tr 2^{1+\floor{\log_2(\lenX{\seg}/\tr)}}
        <%
        2 \tr \num,
    \end{equation*}
    the segment $\seg$ intersects at most $2(\tr+1)$ horizontal lines
    of the grid $\Grid_\num$. This implies that $\seg$ can intersect
    at most $4(\tr+1) +1$ grid cells of $\Grid_\num$. Computing the
    grid cells that intersect $\seg$ is a classical problem in
    graphics (i.e.,
    \href{https://en.wikipedia.org/wiki/Line_drawing_algorithm}{line
       drawing problem}) which can be solved in $O(1+\tr)$ time; see
    \figref{canonical}.
\end{proof}

\subsubsection{Registering the segments}
\seclab{register}

\myparagraph{The refined quadtree $\QTM$.}
\seclab{refine}

Given the set $\SS$ of $n$ segments and the above quadtree $\QT$, the
algorithm first computes the set of canonical squares
$\SSet = \mathrm{cells}(\QT) \cup \bigcup_{\seg \in \SS}
\GridSetY{\seg}{\tr}$; see \lemref{register}. The algorithm then
computes the (compressed) quadtree $\QTM$ for $\SSet$. This can be
done in $O( n \log n)$ time \cite{h-gaa-11}, as $\cardin{\SS} = n$,
$\cardin{\mathrm{cells}(\QT)} = O(n)$, and
$\cardin{\SSet} = O( ( 1+ \tr ) n ) = O(n)$, as $\tr$ is a constant.
Each square of $\SSet$ is now present as a cell of a node of the
computed quadtree.

The algorithm now stores every segment $\seg \in \SS$ in the cells of
$\GridSetY{\seg}{\tr}$. Each cell $\sq \in \GridSetY{\seg}{\tr}$ corresponds
to a node $v$ in the quadtree, and the algorithm stores $\seg$ in
$\clX{v}$.  This takes $O(n \log n)$ time; see
\cite{h-gaa-11}. Thus, for every quadtree node $v$, the
algorithm computes a list $\clLX{v}$ of segments registered there.
Segments registered in this list are long for the cell $\sq_v$ but short for
cells in higher levels of the quadtree.

Propagating each segment up to the parent node, we also register each
segment in a list $\clSX{v}$ of another node $v$ (this is done only
for one level up). Segments registered in this list are short for the
cell $\sq_v$ but long for cells in lower levels of the quadtree. Since
computing the registering cells for each segment takes $O(1)$ time,
computing the lists $\clSX{\cdot}$ and $\clLX{\cdot}$ for all nodes in
the tree takes $O(n)$ time, and the lists themselves have total length
$O(n)$.

A segment is registered only once as a short or long segment on any path in the quadtree.

\begin{definition}
    \deflab{long:short}%
    The \emphi{$\tr$-long congestion} of $\sq$ is
    $\longcY{\tr}{\sq} = \lenX{ \longSY{\tr}{\sq} \sqcap \sq
    }/r$. Similarly the \emphi{$\tr$-short congestion} of $\sq$ is
    $\shortcY{\tr}{\sq} = ||\shortSY{\tr}{\sq} \sqcap \sq ||/r$.
    Given a quadtree $\QT$, its \emphi{$\tr$-long congestion} and
    \emphi{$\tr$-short congestion} are
    \begin{equation*}
        \longc
        =
        \longcY{\tr}{\QT} = \max_{\sq \in \mathrm{cells}(\QT)}
        \longcY{\tr}{\sq}
        \qquad\text{and}\qquad%
        \shortc
        =
        \shortcY{\tr}{\QT} = \max_{\sq \in \mathrm{cells}(\QT)}
        \shortcY{\tr}{\sq}.
    \end{equation*}
\end{definition}

For a node $v \in \QTM$, let $\ancX{v}$ (resp., $\descX{v}$) be the
list of ancestors (resp., descendants) of $v$ in the tree $\QTM$. Here
(emptily) $v \in \ancX{v}$ and $v \in \descX{v}$.  Consider a node
$v \in \QTM$, and let $\sq_v$ be its associated square. We have that
\begin{equation*}
    \longSY{\tr}{\sq_v}%
    =%
    \bigcup_{u \in \ancX{v}} \pth{\clLX{u} \cap \sq_v}
    \qquad\text{and}\qquad%
    \shortSY{\tr}{\sq_v}%
    =%
    \bigcup_{u \in \descX{v}} \clSX{u}.
\end{equation*}

To summarize, we have described how to augment a quadtree $\QT$
by adding more cells and registering short and long segments at the cells,
yielding a new quadtree $\QTM$. In what follows, we
use this stored information to compute the long and short congestions
of $\QTM$ (which are at least the congestions of the
sub-quadtree $\QT$).

\subsection{Computing the congestion of the short segments}
\seclab{short:alg}

The $\tr$-short congestion is computed using dynamic programming, as
described next.

\begin{lemma}
    \lemlab{short:alg}%
    Given a set $\SS$ of $n$ segments in the plane and a quadtree
    $\QTM$ of size $O(n)$, one can compute, in $O(n )$ time, the
    $\tr$-short congestion for all the nodes of $\QTM$, where $\tr$ is
    a constant.
\end{lemma}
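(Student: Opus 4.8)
The goal is to compute the $\tr$-short congestion $\shortcY{\tr}{\sq_v}$ for every node $v$ of $\QT^+$. By \defref{long:short}, this equals $\lenX{\shortSY{\tr}{\sq_v} \sqcap \sq_v}/r_v$, where $r_v$ is the radius of $\sq_v$. The key structural fact already established in the excerpt is that
\[
    \shortSY{\tr}{\sq_v} = \bigcup_{u \in \descX{v}} \clSX{u},
\]
so the short segments of $\sq_v$ are exactly the segments registered (as short) in the subtree rooted at $v$. The main subtlety is the clipping $\sqcap \sq_v$: even though each such segment is contained in a descendant cell, I must account for the fact that the quantity we want is the total clipped length inside $\sq_v$, not the raw total length of the registered segments. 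My plan is to argue that the clipping is essentially free here, because a segment registered as short at a descendant $u$ lies inside $\sq_u \subseteq \sq_v$, so its intersection with $\sq_v$ is the whole segment --- hence $\lenX{\shortSY{\tr}{\sq_v} \sqcap \sq_v} = \sum_{u \in \descX{v}} \lenX{\clSX{u}}$, a plain sum of stored lengths with no geometric clipping needed.

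First I would precompute, at each node $u$, the scalar $\ell(u) = \lenX{\clSX{u}}$, the total length of the segments registered short at $u$. Since the lists $\clSX{\cdot}$ have total size $O(n)$ (as noted in Section~\ref{sec:register}), this takes $O(n)$ time. Next I would compute, for each node $v$, the aggregate $S(v) = \sum_{u \in \descX{v}} \ell(u)$ by a single bottom-up dynamic-programming pass: a leaf has $S(v) = \ell(v)$, and an internal node satisfies $S(v) = \ell(v) + \sum_{\text{children } w} S(w)$. This traversal visits each of the $O(n)$ nodes a constant number of times, so it runs in $O(n)$ time. Finally, dividing by the radius $r_v$ of each cell gives $\shortcY{\tr}{\sq_v} = S(v)/r_v$ in $O(1)$ per node.

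The only place the $O(n \log n)$ bound (rather than $O(n)$) enters is in the preprocessing that sets up the registration itself --- building $\QT^+$ and computing the lists $\clSX{\cdot}$ --- which Section~\ref{sec:register} already charges at $O(n\log n)$; the dynamic program proper is linear. So the running time claim is immediate once correctness is in hand. The step I expect to require the most care is the justification that no clipping occurs, i.e.\ that every segment in $\clSX{u}$ lies entirely within $\sq_v$ for each ancestor $v$ of $u$. This follows because $\clSX{u}$ is obtained by propagating up the segments registered long at $u$ in $\GridSetY{\seg}{\tr}$, each of which is contained in the canonical square $\sq_u$; and since $v$ is an ancestor, $\sq_u \subseteq \sq_v$, so the segment is contained in $\sq_v$ and the clipping is the identity. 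Establishing this containment cleanly --- and confirming it is consistent with the definition of $\shortSY{\tr}{\sq_v}$ as the union over descendants --- is the crux; once it is pinned down, the dynamic program and its analysis are routine.
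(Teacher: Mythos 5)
Your dynamic program has the right overall shape and matches the paper's in structure (a per-node scalar summed bottom-up over descendants, then divided by the radius), but the justification that ``clipping is the identity'' is false, and it is precisely the step you flagged as the crux. A segment $\seg$ is registered at the cells of $\GridSetY{\seg}{\tr}$, which by \lemref{register} are the (up to $2\tr+3$) cells of a grid of side length $\num$ with $\tr\num\le\lenX{\seg}<2\tr\num$ that $\seg$ merely \emph{intersects}; the segment is roughly $\tr$ times longer than the side of each such cell, so it is certainly not contained in any one of them --- the cells tile the segment, not the reverse. The same failure occurs at the short-registration level: being $\tr$-short for $\sq_u=\sqY{\pp}{r}$ only requires intersecting the interior of $\sq_u$ and having length less than $\tr r$, which for the large constant $\tr$ the paper eventually chooses far exceeds the diameter $2\sqrt{2}\,r$ of the cell. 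So a segment in $\clSX{u}$ can, and typically does, stick out of $\sq_u$, and a fortiori need not lie inside $\sq_v$ for an ancestor $v$ (consider a segment straddling the boundary between two large quadtree cells). Consequently your quantity $S(v)=\sum_{u\in\descX{v}}\lenX{\clSX{u}}$ is not $\lenX{\shortSY{\tr}{\sq_v}\sqcap\sq_v}$: it counts the full length of each short segment, including the portion lying outside $\sq_v$, and it counts a segment once for each of the several incomparable descendants at which it may be registered.

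The paper's proof avoids this by clipping each registered segment to the cell at which it is registered: the per-node scalar is $\lenX{\clSX{u}\sqcap\sq_u}$ rather than $\lenX{\clSX{u}}$, and the recurrence is $\lenTShortX{v}=\lenX{\clSX{v}\sqcap\sq_v}+\sum_{u\text{ child of }v}\lenTShortX{u}$. Because the registration cells of a given segment are interior-disjoint and cover it, these clipped pieces partition $\seg\cap\sq_v$, which is what makes $\lenTShortX{v}=\lenX{\shortSY{\tr}{\sq_v}\sqcap\sq_v}$ hold. The repair to your argument is local --- replace $\ell(u)$ by the clipped length $\lenX{\clSX{u}\sqcap\sq_u}$, still computable in time linear in $|\clSX{u}|$ --- but as written your proof establishes the wrong identity, not merely a rougher version of the right one.
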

\begin{proof}
    We compute the $\tr$-short congestion of a quadtree via dynamic
    programming. The algorithm finds the total length of short
    segments intersecting each leaf and propagates the values upward.

    For every node $v \in \QTM$, the algorithm computes the quantity
    $\lenX{\clSX{v} \sqcap \sq_v}$. Computing the value for node $v$
    requires time proportional to the total size of the list
    $\clSX{v}$, so doing it for all the nodes of the tree takes $O(n)$
    time overall.  Next, the algorithm traverses the tree bottom-up.
    For each node along the way, it computes the total lengths of the
    intersecting short segments:
    \begin{equation*}
        \lenTShortX{v}%
        =%
        \sum_{u \in \descX{v}} \lenX{\clSX{u} \sqcap \sq_u}
        =%
        \lenX{\clSX{v} \sqcap \sq_v} + \sum_{u \text{ child of }v}
        \lenTShortX{u}.
    \end{equation*}
    It is easy to verify that
    $\lenTShortX{v} = \lenX{\shortSY{\tr}{\sq_v} \sqcap \sq_v}$. The
    lemma follows.
\end{proof}

\section{Approximating the maximum load of the long segments %
   in a quadtree}
\seclab{long}

Handling the long congestion quickly seems challenging. Instead, here
we would quickly approximate a proxy for this quantity -- the maximum
number of such segments visiting a single node.

\begin{definition}
    \deflab{max:sz}%
    The \emphi{$\tr$-load} of a quadtree $\QTM$ is the quantity
    \begin{math}
        \maxSz = \max_{\sq \in \QTM} |\longSY{\tr}{\sq}|.
    \end{math}
\end{definition}

\subsection{A \naive algorithm for computing the $\tr$-load}

To compute its exact load of a given square, we need to find the long
segments intersecting the cell. We can compute the conflict list for
each cell by pushing long segments downward from the conflict lists of
its ancestors (as done in the \naive algorithm of \secref{naive:1}).
Unfortunately, these lists can get quite long.

\begin{lemma}
    \lemlab{silly}%
    One can compute the $\tr$-load congestion of $\QTM$ in
    $O(n \log n + \maxSz n )$ time.  More generally, given a set
    $\Sample \subseteq \SS$ and a threshold $t$, one can decide
    whether
    $\maxSzX{\Sample} = \max_{\sq \in \QTM} |\Sample \cap
    \longSY{\tr}{\sq}| \leq t$, in $O(n \log n + t n)$ time.
\end{lemma}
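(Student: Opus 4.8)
The plan is to compute, for every node $v \in \QT^+$, its entire long conflict list $\longSY{\tr}{\sq_v}$ by a single top-down sweep that propagates long segments from each node down to the children they intersect. Concretely, I would maintain at each node $v$ a list $A_v$ (which I claim equals $\longSY{\tr}{\sq_v}$) and compute it from the parent via the recursion
\begin{equation*}
  A_w = \clLX{w} \cup \Set{\seg \in A_v}{\seg \cap \sq_w \neq \emptyset},
\end{equation*}
taken over the children $w$ of $v$, with $A_{\mathrm{root}} = \clLX{\mathrm{root}}$. The segments of $\clLX{w}$ are exactly those registered as $\tr$-long at $w$, all of which meet $\sq_w$, while the segments inherited from $A_v$ are the long segments registered strictly higher that still reach $\sq_w$.

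First I would verify correctness by induction on the depth, using the identity $\longSY{\tr}{\sq_v} = \bigcup_{u \in \ancX{v}} \pth{\clLX{u} \cap \sq_v}$ established in \secref{register}. For the base case, the only ancestor of the root is itself, and every segment of $\clLX{\mathrm{root}}$ meets $\sq_{\mathrm{root}}$, so $A_{\mathrm{root}} = \longSY{\tr}{\sq_{\mathrm{root}}}$. For the step, since $\sq_w \subseteq \sq_v$, a segment of $A_v$ meets $\sq_w$ iff it is registered at a strict ancestor of $w$ and meets $\sq_w$; adjoining $\clLX{w} = \clLX{w} \cap \sq_w$ gives $\bigcup_{u \in \ancX{w}} \pth{\clLX{u} \cap \sq_w} = \longSY{\tr}{\sq_w}$. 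Once the lists $A_v$ are in hand, the long congestion of each cell is $\longcY{\tr}{\sq_v} = \lenX{A_v \sqcap \sq_v}/r$ (with $r$ the radius of $\sq_v$), obtained by clipping each stored segment in $O(1)$ time, and the answer is the maximum over all cells.

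For the running time, distributing $A_v$ to its four children costs $O(|A_v|)$ per node, as each membership test $\seg \cap \sq_w \neq \emptyset$ is $O(1)$; inserting the registered lists costs $\sum_w \cardin{\clLX{w}} = O(n)$ overall, since by \lemref{register} each segment is registered $O(1+\tr) = O(1)$ times. Because $|A_v| = \cardin{\longSY{\tr}{\sq_v}} \leq \maxSz$ and $\QT^+$ has $O(n)$ nodes, the total propagation and clipping cost is $\sum_v |A_v| = O(\maxSz\, n)$. Adding the $O(n \log n)$ cost of building $\QT^+$ and registering the segments yields the bound $O(n \log n + \maxSz\, n)$.

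For the decision version I would run the identical sweep but restricted to $\Sample$, tracking $A_v \cap \Sample$, with one modification: whenever a list reaches size $t+1$, halt immediately and report $\maxSzX{\Sample} > t$. Since the algorithm stops the instant any list would exceed $t$, every list it fully processes has size at most $t$, so each node costs $O(t)$ and the total work is $O(n \log n + t n)$; and if no list ever exceeds $t$, the counts are exact and $\maxSzX{\Sample} \leq t$, so the output is correct. The main obstacle to watch is the compressed quadtree: along a compressed edge from $v$ to a descendant $w$ I must still push down exactly the segments of $A_v$ meeting $\sq_w$ without skipping any cell of interest. This is handled by the compressed-quadtree convention adopted after \lemref{canonical}, together with the fact that $\SSet$ forces every relevant square — in particular all cells of $\bigcup_{\seg} \GridSetY{\seg}{\tr}$ — to appear as a genuine node of $\QT^+$, so the checks along compressed edges reduce to the same $O(1)$ geometric intersection test used throughout.
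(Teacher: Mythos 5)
Your proposal is correct and follows essentially the same approach as the paper: a top-down sweep that pushes each node's long conflict list to its children, filters by intersection with the child's cell, unions in the registered list $\clLX{\cdot}$, and for the decision version bails out as soon as any list exceeds the threshold $t$. You merely make explicit the induction on the identity $\longSY{\tr}{\sq_v} = \bigcup_{u \in \ancX{v}} \pth{\clLX{u} \cap \sq_v}$ and the compressed-quadtree bookkeeping, which the paper leaves implicit.
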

\begin{proof}
    The algorithm starts with the precomputed lists $\clLX{u}$ and
    traverses the tree top-down. At each node, it pushes the stored
    list down to the children. Each child $v$ selects the segments of
    the incoming list that intersect its cell, and takes the union of
    this filtered incoming list with $\clLX{v}$. This yields the
    conflict list of $v$ of \emph{all} $\tr$-long segments that
    intersect it, denoted by $\clALX{v}$. It then pushes this list
    down to its children, and so on.

    For each node $v$ of the quadtree, it is now straightforward to
    compute the congestion of the segments of $\clALX{v}$
    (or the length of the list $\clALX{v}$)  for the cell of the node $v$.  It
    follows that this computes the $\tr$-long congestion for each node
    of the quadtree.  Since the maximum length of the lists sent down
    is $\maxSz$, the claim follows.

    The second algorithm works in a similar fashion, except that the
    algorithm propagates downwards only segments of $\Sample$. If in
    any point in time the computed conflict list gets bigger than $t$,
    the algorithm bails out, returning that $\maxSzX{\Sample} > t$.
\end{proof}

\begin{figure}[b]
    \centerline{%
       \includegraphics{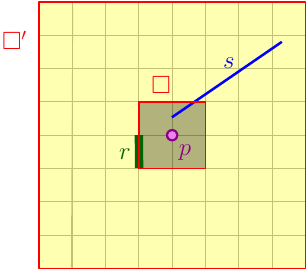}%
    }
    \caption{An $\tr$-long segment for a square
       $\sq = \sqY{\pp}{r}$ intersects the square
       $\sq' = \sqY{\pp}{(1+\tr)r}$ with a segment of length at
       least $\tr r$ (here, $\tr=3$).}
    \figlab{long}
\end{figure}

\subsection{From long congestion to load, and back}

Instead of computing the $\tr$-long congestion of a quadtree exactly,
we content ourselves with a constant-factor approximation, by using
the load instead.

\begin{lemma}
    \lemlab{long:cong}%
    For a cell $\sq = \sqY{\pp}{r}$ of $\QTM$, we have
    \begin{math}
        \displaystyle%
        \frac{\longcY{\tr}{\sq}}{\sqrt{8}}%
        \leq%
        |\longSY{\tr}{\sq}|%
        \leq%
        \frac{1+\tr}{\tr} \congX{\SS}.
    \end{math}
\end{lemma}

\begin{proof}
    The intersection of each segment of $\longSY{\tr}{\sq}$ with $\sq$
    can have length at most $\sqrt{2}\cdot 2 r = \sqrt{8}r$ (as the
    sidelength of $\sq$ is $2r$), so
    $\longcY{\tr}{\sq} = \lenX{ \sq \sqcap \longSY{\tr}{\sq}}/ r$, and
    $\lenX{ \sq \sqcap \longSY{\tr}{\sq}}\leq |\longSY{\tr}{\sq}|
    \cdot r \sqrt{8}$, and the first inequality follows.

    As for the second inequality, consider
    $\sq' = \sqY{\pp}{(1+\tr)r}$. By definition, each long segment
    $\seg \in \longSY{\tr}{\sq}$ has length at least $\tr r$ and
    intersects $\sq$. The length of $\seg \cap \sq'$ is at least $\tr
    r$, see \figref{long}.
    As such, we have
    \begin{equation*}
        \cardin{\longSY{\tr}{\sq}} \cdot \frac{\tr r}{(1+\tr)r} %
        \leq%
        \frac{\lenX{\longSY{\tr}{\sq} \sqcap \sq'}}{(1+\tr)r}%
        \leq%
        \congX{\sq'}
        \leq%
        \congX{\SS}.
        \CGTVer{\qedhere}%
    \end{equation*}
\end{proof}

\subsection{Exponential search for maximum $\tr$-load}

It remains to estimate the load (i.e., maximum number of $\tr$-long
segments intersecting any cell in the quadtree $\QTM$).  The basic
idea is to couple random sampling together with exponential search, to
approximate the maximum load. If during a round an ``overflow''
occurs, the algorithm increases the guess for the load by (say a
factor of $2$), and moves on to the next round.  Using sampling and
exponential search to estimate a quantity is an old idea. In the
context of geometric settings, it was used before to estimate the
maximum depth of nicely behaved regions, see \cite{ah-adrp-08}.

\subsubsection{The algorithm}

Our purpose here is to approximate the $\tr$-load of $\QTM$, denoted
by $\maxSz$, see \defref{max:sz}.

\myparagraph{Estimation via sampling.}

We perform an exponential search for the size of the largest conflict
list. We use the following standard sampling lemma. It follows by a
standard application of Chernoff's inequality -- see \cite[Lemma
2.7]{bhrrs-eeiso-20} for a proof.

\begin{lemma}  %
    \lemlab{estimate}%
    Consider two (finite) sets $\SC \subseteq \SA$, where
    $n = \cardin{\SA}$.  Let $\epsA\in (0,1)$ and
    $\BadProb \in (0,1/2)$ be parameters, and let
    $r = \ceil{\smash{\cE \epsA^{-2} \tfrac{n}{g} \log
          \BadProb^{-1}}\bigr.}$, where $\cE$ is a sufficiently large
    constant.

    Let $g > 0$ be a user-provided guess for the size of
    $\cardin{\SC}$. Consider a random sample $\Sample$, taken from
    $\SA$ by picking each element with probability $\tfrac{r}{n}$,
    Next, consider the estimate
    $Y = \tfrac{n}{r} \cardin{ \Sample \cap \SC}$ to
    $\cardin{\SC}$. Then, we have the following: \smallskip%
    \begin{compactenumA}
        \item If $Y < g/2$, then $\cardin{\SC} < g$,
        \item If $Y \geq g/2$, then
        $(1-\epsA)Y \leq \cardin{\SC} \leq (1+\epsA)Y$.
    \end{compactenumA}
    \smallskip%
    Both statements above hold with probability $\geq 1-\BadProb$.
\end{lemma}

\myparagraph{The algorithm.}

The number of cells in $\QTM$ is $O(n)$, and let
$\BadProb =1/n^{O(1)}$. Let $\eps \in (0,1)$ be a prespecified
approximation parameter, and let $\epsA = \eps/3$.  At each round, the
algorithm has a threshold number $g_i$ and checks whether (i)
$\maxSz > 8g_i$ (roughly), or (ii) $\maxSz =\Theta(g_i)$ and it can
approximated reliably and quickly.  The $i$\th round start, with
parameters
\begin{equation}
    g_i = 2^{i-1}
    \qquad\text{and}\qquad%
    r_i =
    \ceil{\smash{\cE \epsA^{-2} \tfrac{n}{g_i} \log \BadProb^{-1}}}.
    \eqlab{prob}
\end{equation}
where $\cE$ is a sufficiently large constant.  If the $i$\th round
failed, the algorithm goes on to the next round (i.e., by increasing
$i$ by one).

Let $I$ be the first integer such that $r_i < n$ (i.e.,
$I = \Theta( \delta^{-1} \log n )$), and the algorithm starts with
$i=I$.  This first round, the algorithm uses \lemref{silly}
directly. If it finds some conflict list (for $\SS$) that contains
more than $8g^{}_i = 8 g_I^{}$ segments at any point during the
execution, it concludes that $g_i$ is too small. Thus this round is be
a failure, and it goes on to the next round.  Otherwise, the algorithm
computes $\maxSz$ exactly and returns its value.

For later rounds, and larger values of $g_i$, the round begins by
taking a random sample $\Sample_i \subseteq \SS$ of the segments. Each
segment is included in $\Sample_i$ independently with probability
$\prob_i= r_i / n$ (i.e., a random sample according to
\lemref{estimate}).  The algorithm then calls the subroutine of
\lemref{silly} with $\Sample_i$ as the list of segments and with
$U_i = 8 g_i \prob_i$ as the threshold.  If the output reveals that
$\maxSzX{\Sample } > U_i$, then the guess $g_i$ is too small, the
round failed, and the algorithm continues to the next round.

If a round succeeds (and $i \neq I$), the algorithm outputs
$(1-\epsA)Y$ for $Y= \tfrac{n}{r_i}\maxSzX{\Sample }$ as an estimate
from below to $\maxSz$.

\subsubsection{Analysis}

\myparagraph{Running time.}  The first round takes $O( n\log^2 n )$
time.  Since there are only $n$ segments in $\SS$, the algorithm
always stops by round $O( \log n)$.  The threshold used in each round
is
\begin{equation*}
    U_i
    =%
    8g_i \prob_i
    =
    \frac{8g_i r_i}{n}
    =
    O\pth{  \smash{\frac{g_i n}{n \epsA^2 g_i n} } \log n^{O(1)} }
    =
    O\Bigl( \frac{\log n}{\eps^{2}} \Bigr).
\end{equation*}
As such, the running time of each round is $O(\eps^{-2} n \log n)$;
see \lemref{silly}, and the overall running time is
$O(\eps^{-2} n \log^2 n)$.

\paragraph*{The result.}

\begin{lemma}
    \lemlab{long:alg}%
    Using the above randomized algorithm, one can compute, in
    $O( \eps^{-2}n{\log^2 n} )$ time, a number $\Delta$, such that
    with high probability, we have
    \begin{math}
        \Delta \leq \maxSz \leq (1+\eps)\Delta,
    \end{math}
    where $\maxSz$ is the $\tr$-load of $\QTM$, see \defref{max:sz}.
\end{lemma}
\begin{proof}
    If the algorithm terminated after the first round, then it output
    $\maxSz$ exactly, and we are done.

    Otherwise, if it failed in a round, it must be that it found a
    node that its conflict list exceeds the ``expected'' threshold
    $U_i = 8 g_i \prob_i$. \lemref{estimate} implies that the random
    sample in such cases estimates the size of the original conflict
    list correctly, and it is at least of size $(1-\epsA)U_i > 4
    g_i$. Namely, $g_i$ is (way) too small, and this decision was made
    correctly with probability $1 - 1/n^{O(1)}$.

    If the algorithm succeeded in a round, then \lemref{estimate}
    implies that $\maxSz \leq (1+\epsA)U_i < 16 g_i$. Since the
    algorithm failed the previous round, we have that
    $\maxSz > 4g_{i-1} \geq 2g_i$.  \lemref{estimate} then implies
    that $\maxSz$ is $(1\pm\epsA)$-estimated correctly. Specifically,
    for the $j$\th ``heavy'' node (i.e., conflict list size $\maxSz_j$
    is larger than $g_i$) in the tree, let $Y_j$ be the estimate of
    its conflict list size. We have that
    $(1-\epsA)Y_j \leq \maxSz_j \leq (1+\epsA)Y_j$. This inequality
    clearly holds also on the max value, which implies that
    \begin{equation*}
        (1-\epsA)Y \leq \maxSz \leq (1+\epsA)Y
        \implies%
        Y
        \leq \maxSz \leq \frac{1+\epsA}{1-\epsA}Y
        \leq (1+ 3\epsA)Y
        =
        (1+\eps)Y.
    \end{equation*}

    An important technicality here is that any of the conflict lists
    in the tree of size smaller than $g_i/2$ are too small after the
    sampling to compete with the conflict list realizing $\maxSz$.
    Similarly, this also holds for conflict lists of size $\leq
    g_i$. Thus, all the conflict lists in play for realizing the
    maximum of the sample are estimated correctly\footnote{So,
       confusingly, while the winning estimate might not be from the
       node with the largest conflict list, it comes from a node with
       a conflict list of size very close to it -- if it was much
       smaller, it would not have been able to beat it.}.
\end{proof}

\section{Approximating the maximum congestion}
\seclab{together}

We seem to have lost our way, so lets try to get back on
track. Consider the following quantity
\begin{equation*}
    \Longc(\QTM)
    =
    \max_{\sq \in \mathrm{cells}(\QTM)}
    \max\bigl[ \longcY{\tr}{\sq},
    \congOnY{(1+\tr)\sq  \bigr.} {\longSY{\tr}{\sq}}\bigr],
\end{equation*}
which is the \emphi{augmented} long congestion of $\QTM$. We
define\footnote{Spoiler alert! The hero is still dead, and it turns
   out that for $\QT_1, \ldots, \QT_m$ the quadtrees of
   \lemref{canonical}, $\max_i \Longc(\QT_i)$ is a good approximation
   to $\congX{\SS}$.  }
\begin{equation*}
    \CongOnY{\QTM}{\SS}
    =
    \max\pth{  \shortc(\QTM),
       \Longc(\QTM)}.
\end{equation*}

Clearly,
$\congOnY{\QTM}{\SS}/2 \leq \CongOnY{\QTM}{\SS} \leq \congX{\SS}$.

\begin{lemma}
    \lemlab{silly:4}%
    We have
    $ \frac{\tr}{1+\tr} \maxSz \leq \Longc(\QTM) \leq \sqrt{8} \maxSz
    $.
\end{lemma}
\begin{proof}
    The upper bound is immediate.  The lower bound readily follows
    from the argument used in the proof of \lemref{long:cong}.
\end{proof}

\begin{lemma}
    \lemlab{a:cong:q:t}%
    One can compute, in $O(n \log^2 n)$ time, a quantity
    \begin{math}
        \ACongOnY{\QTM}{\SS},
    \end{math}
    such that
    \begin{math}
        \ACongOnY{\QTM}{\SS} \leq \CongOnY{\QTM}{\SS} \leq 3
        \ACongOnY{\QTM}{\SS}.
    \end{math}
    This holds with high probability.
\end{lemma}

\begin{proof}
    let $\eps = 1/100$, and let $\Delta$ be the approximation of
    $\maxSz$ computed by the algorithm of \lemref{long:alg}. This
    takes $O(n \log^2 n )$ time. We now compute $\shortc(\QTM)$ in
    $O(n \log n)$ time using the algorithm of \lemref{short:alg}. We
    compute the quantity
    \begin{equation*}
        \ACongOnY{\QTM}{\SS}
        =
        \max\Bigl(  \shortc(\QTM),
        \frac{\tr}{1+\tr}\Delta
        \Bigr).
    \end{equation*}
    Clearly, $\ACongOnY{\QTM}{\SS} \leq \CongOnY{\QTM}{\SS}$.
    As for the other direction, observe that
    \begin{equation*}
        \frac{\sqrt{8}(1+\eps)(1+\tr)}{\tr}
        \ACongOnY{\QTM}{\SS}
        \geq
        \frac{\tr}{1+\tr}\Delta
        \cdot
        \frac{\sqrt{8}(1+\eps)(1+\tr)}{\tr}
        \geq
        \sqrt{8}
        \maxSz
        \geq
        \Longc(\QTM),
    \end{equation*}
    by \lemref{silly:4}. Numerical calculations shows that for
    $\eps =1/100$ and $\tr=20$,
    $\frac{\sqrt{8}(1+\eps)(1+\tr)}{\tr} \leq 3$, which establish the
    claim.
\end{proof}

Finally, we arrive at our main result.
\begin{theorem}
    \thmlab{bound:final}%
    Let $\SS$ be a set of $n$ segments in the plane.  One can compute,
    in $O(n \log^3 n)$ time, a $42$-approximation to
    $\congX{\SS}$. The algorithm is randomized and succeeds with high
    probability.
\end{theorem}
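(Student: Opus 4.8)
The plan is to chain the three reduction lemmas proved above and then bound the accumulated loss. Concretely, the algorithm (i) applies \lemref{wspd} to replace $\SS$ by a family $\GSS$ of $O(n)$ squares, (ii) applies \lemref{canonical} to replace $\GSS$ by three shifted quadtrees $\QT_1,\QT_2,\QT_3$, and (iii) applies \lemref{bound:tree} to each quadtree to obtain estimates $\cEX{\QT_1},\cEX{\QT_2},\cEX{\QT_3}$. It then returns $A = \max_{i=1,2,3}\cEX{\QT_i}$.

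For the running time, I would first fix $\eps$ to a sufficiently small \emph{constant} in \lemref{wspd}; then $\cardin{\GSS} = O(n/\eps^2) = O(n)$, and the squares are produced in $O(n\log n)$ time. Constructing the three quadtrees on the $O(n)$ segment endpoints and square corners costs $O(n\log n)$ by \lemref{canonical}. The three invocations of \lemref{bound:tree} each run in $O(n\log^2 n)$ time, which dominates the total, giving $O(n\log^2 n)$ overall.

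For correctness, I would compose the guarantees in order. The upper bound is immediate, since $\cEX{\QT_i}\leq\congX{\SS}$ for each $i$ (by \lemref{bound:tree}) forces $A\leq\congX{\SS}$. For the lower bound, chaining \lemref{bound:tree}, \lemref{canonical}, and \lemref{wspd} yields
\begin{equation*}
    A
    =
    \max_{i}\cEX{\QT_i}
    \geq
    \frac{\max_{i}\congX{\QT_i}}{4\sqrt{2}+\eps}
    \geq
    \frac{\congY{\GSS}{\SS}}{\pth{4\sqrt{2}+\eps}\,6\sqrt{2}}
    \geq
    \frac{\congX{\SS}}{\pth{4\sqrt{2}+\eps}\,6\sqrt{2}\,(6+\eps)} .
\end{equation*}
The product of the leading constants is $4\sqrt{2}\cdot 6\sqrt{2}\cdot 6 = 288$, so collapsing the lower-order terms into a single parameter shows $A$ is a $(288+\eps)$-approximation, hence an $O(1)$-approximation, to $\congX{\SS}$.

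Finally, the only randomness enters through \lemref{bound:tree}, which succeeds with high probability. Since the algorithm calls it only a constant number of times, a union bound guarantees that all three estimates are simultaneously accurate with high probability, establishing the success guarantee. I expect no genuine obstacle at this stage: the difficult analysis --- the exponential search and the Chernoff bounds --- is already encapsulated in \lemref{long:alg} and \lemref{bound:long}. The only care required is the bookkeeping of the three constant factors (and of the $\tr,\delta$ choices inside \lemref{bound:tree}) so that their product is exactly $288$ and the various slacks combine into a single $\eps$.
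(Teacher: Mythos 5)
Your proposal is correct and follows essentially the same route as the paper's own proof: invoke \lemref{wspd}, \lemref{canonical}, and \lemref{bound:tree} in sequence, output the maximum of the three quadtree estimates, and multiply the constants $(6+\eps)\cdot 6\sqrt{2}\cdot(4\sqrt{2}+\eps)$ to obtain the $(288+\eps)$-approximation. Your additional remarks on the explicit union bound over the three randomized calls and on fixing $\eps$ as a constant for the runtime are consistent with (and slightly more explicit than) the paper's argument.
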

\begin{proof}
    Compute, in $O(n \log^2 n)$ time, the $m=O(\log n)$ quadtrees
    $\QT_1, \ldots, \QT_m$ of \lemref{canonical}. Each quadtree
    $\QT_i$ is refined as described in \secref{refine}, into a
    quadtree $\QT_i^+$, and we compute for each one of them the
    quantity $\zeta_i = \ACongOnY{\QT^+_i}{\SS} $, using the
    algorithm of \lemref{a:cong:q:t} in $O(n \log^2 n)$ time. Overall,
    the running time is $O( n\log^3 n )$, and all the steps so far
    succeeded with high probability.  The algorithm returns
    $\zeta = \max_i \zeta_i \leq \congX{\SS}$ as the desired
    approximation to the congestion. For the quality of approximation,
    observe that
    \begin{equation*}
        42 \zeta
        =
        14 \max_i 3 \zeta_i
        \geq
        14 \max_i \CongOnY{\QT^+_i}{\SS}
        \geq
        7 \max_i \congOnY{\QT^+_i}{\SS}
        \geq
        7 \cdot
        \frac{1}{7} \congX{\SS}
        =%
        \congX{\SS}.
        \CGTVer{\qedhere}%
    \end{equation*}
\end{proof}

\section{Conclusions}
\seclab{conclusions}

We provided a near-linear time algorithm that computes a constant
factor approximation to the congestion of a polygonal curve (i.e., the
minimum $c$ such that the curve is $c$-packed). We consider the result
to be quite surprising, even though the constant is undesirably large
(i.e., $42$).

The new algorithm works verbatim in any constant dimension -- our
algorithm has not used planarity in any way.  The quality of
approximation deteriorates with the dimension $d$, but it is still a
constant when $d$ is a constant. The running time remains
$O(n \log^3 n)$.

Another important property of the new algorithm is that it does not
require the input segments to form a curve. It is natural to
conjecture that in the plane, if the input is a polygon curve that
does not self intersect, then one should be able to
$(1+\eps)$-approximate the congestion in near linear time. We leave
this as an open problem for further research.

As mentioned above, the constant in the approximation quality of the
new algorithm is not pretty (currently $42$). Reducing the constant
further while keeping the running time near-linear is an interesting
problem for future research.

Nothing in our algorithm is a no-starter from a practical point of
view. Performing an experimental study using this algorithm is an
interesting future research.

\RegVer{%
   \myparagraph{Acknowledgements.}

Work on this paper started more than a decade ago in a collaboration
between the first author and Anne Driemel.  The authors thank the
anonymous referees for their detailed and insightful comments that had
significantly improved this work.

}

\BibTexMode{%
   \CGTVer{%
      \bibliographystyle{plainurl}
   }
   \SoCGVer{%
      \bibliographystyle{plain}%
   }%
   \RegVer{%
      \bibliographystyle{alpha}%
   }%
   \bibliography{packedness} }

\BibLatexMode{\printbibliography}

\end{document}